\begin{document}

\title{Queueing Game For Spectrum Access in Cognitive
Radio Networks}
\author{Zheng~Chang,~\IEEEmembership{Member,~IEEE,}
 Tapani~Ristaniemi,~\IEEEmembership{Senior~Member,~IEEE,}
Zhu~Han,~\IEEEmembership{Fellow,~IEEE,}
\thanks{Z. Chang and T. Ristaniemi are with University of Jyv\"akyl\"a, Department of Mathematical Information Technology, P.O.Box 35, FIN-40014 Jyv\"akyl\"a, Finland.
Z. Han is with the Electrical and Computer Engineering Department,
University of Houston, Houston, USA.
E-mail: zheng.chang@jyu.fi, zhan2@uh.edu, tapani.ristaniemi@jyu.fi}
}

\maketitle

\begin{abstract}
In this paper, we investigate the problem of spectrum access
decision-making for the Secondary Users (SUs) in the cognitive radio
networks. When the Primary Users (PUs) are absent on certain frequency
bandwidth, SUs can formulate a queue and wait for the Base Station
(BS) to serve. The queue of the SUs will be dismissed if the PU is
emerging in the system. Leveraging the queueing game approaches,
the decision-making process of the SUs that whether to queue or not is
studied. Both individual equilibrium and social optimization
strategies are derived analytically. Moreover, the optimal pricing
strategy of the service provider is investigated as well. Our
proposed algorithms and corresponding analysis are validated
through simulation studies.
\end{abstract}

\begin{IEEEkeywords}
queueing game; pricing; Nash equilibrium strategies; social
optimizations; spectrum access; cognitive radio
\end{IEEEkeywords}

\section{Introduction}
\label{Sec1}

In the fast growing wireless market, the frequency spectrum is one
of the most scarce and valuable resources. However, some surveys
for actual measurements show that most of the allocated spectrum is largely under-utilized \cite{Hossain}. Cognitive radio (CR)
is known as an efficient way to improve spectrum utilization and a
promising technology to enable dynamic spectrum access by
exploiting the unused spectrum in the wireless environments
\cite{Hossain}. In a cognitive radio network (CRN), there are two
types of users, namely, licensed primary users (PUs) who have the licensed spectrum access opportunities and unlicensed
secondary users (SUs) who can only utilize the spectrum which the PUs do not occupy. To exploit limited spectrum efficiently, In a CRN, the SUs are allowed to opportunistically to access licensed
spectrum bands when PU transmission is not presented, which is
able to significantly improve the spectrum utilization efficiency.
\par

During last decades, the technology of CR has attracted plenty of interests and has been investigated
extensively, among which, how to explore the access opportunities
and regular the spectrum access without harming the PUs are of
consistent research interests \cite{Zhao}. Recently, there is a
particular interest to apply the queueing theory \cite{Shiang} which is a natural tool to
analyze the transmission in the wireless networks or the
game theory \cite{Han} which is commonly used in developing optimization algorithms to address the spectrum access problems.
Intriguingly, applying queue theory with pricing strategies also
brings a novel view on understanding the SU's behavior and
cognitive market policies, which can be traced from the original
works of \cite{Naor}-\cite{Altman} that study the equilibrium
behavior in queueing systems. We call it the queueing game in the
following to highlight the key issues in this area: by utilizing the information of queue
and considering its own payoff, the user needs to make the decision on whether to queue or not.
Recent works of \cite{HLi}-\cite{Tran} have leveraged the
queueing game to the spectrum access control in the CRN.
In \cite{HLi}, the authors present an observed queue model with SUs
acquiring for transmission from a cognitive base station (CBS). It
is also assumed that PUs emerged periodically for transmission
opportunities. During the PU's transmission period, CBS stops
serving SUs and SUs remained in the queue waiting for the
re-operation of CBS. Based on this model, a SU's decision
strategy, i.e., joining the queue or balk, is investigated and
the pricing policy was studied. Authors of \cite{Do1} extend the work
of \cite{HLi} to an unobserved case and analyzed the strategies of
SUs and CBS. In \cite{Tran}, the strategy of delay sensitive SUs
is explicitly considered. The authors also consider
pricing and load balancing effect in the spectrum access
decision-making in both monopoly and duopoly markets. Authors of
\cite{Guan} also utilizes the concept of the queueing game and model
users as selfish players that compete with each other by choosing the
optimal transmission threshold maximizing system throughput.
\par

It can be well observed that the queueing game is an effective tool
for analyzing the SUs' behaviors in spectrum access. The motivation
of this work is to extend the previous works and overcome their
limitations. It can be found that the previous works focused on
the case that a separate CBS is used for serving the SUs. Therefore, when the PU
emerges, the SUs can remain in the queue of the CBS and wait for PUs finishing
the transmission. In such case, the queue of the CBS can be modeled and
analyzed by a server breakdown model. However, the installation and deployment of
the CBS bring additional cost and it may not be practical in some
cognitive systems. In contrast, we consider there is only one BS
in the system. The primary job of the BS is to serve the incoming PUs,
while it will utilize the time when the PU frequency band is not occupied to
serve SUs. Specially, when the PU comes, the BS no long holds the
information of the queue and all the SUs are forced to leave the queue.
When a SU decides to join the queue and wait for spectrum access opportunity, the sojourn time induces a
cost and if its job has been finished, the SU can receive a reward. In
addition, if PU emerges, the SUs who have to leave the queue
without any reward or compensate. Under this model, we study the
SU's decision-making process, i.e., whether to join for spectrum access or balk, and
the pricing based spectrum access control. \par

Compared to previous works, the main contributions of this paper
are as follows. We first model the interaction among SUs in a
partially observed queue as a noncooperative game. Based on the
queueing analysis and payoff model, we then analyze both
individually equilibrium strategy and optimal social welfare
strategy of the SUs about whether to join or not. Furthermore, we
study the BS pricing strategy for the system such that the
individually equilibrium decision of SUs can coincide with the
socially optimal strategy that optimizes the total welfare of the
whole system. Our presented analysis and algorithm are
demonstrated by the simulation studies. \par

The rest of this paper is organized as follows. Section
\uppercase\expandafter{\romannumeral2} describes the system model
including the queue model and profit of SUs. We present the individual
equilibrium strategies, social welfare optimization and pricing
studies in Section \uppercase\expandafter{\romannumeral3}. Our presented algorithm and analysis are demonstrated in Section
\uppercase\expandafter{\romannumeral4} through simulation studies,
and finally we conclude this work in Section
\uppercase\expandafter{\romannumeral5}.

\section{System Model}
\label{Sec2}

\subsection{Queue Model}
The system model can be found in Fig. \ref{fig:1}. We consider a
CRN consists of multiple SUs, one BS and multiple PUs. When PUs are
absent, the BS can utilize the available spectrum to serve the SUs. If
the PU is entering the system, due to its priority, the BS has to drop the
service connection of the SUs and starts to serve PU. Meanwhile, the
queue consisting of the SUs is no long exists and all SUs should leave the queue
and seek for other transmission opportunities. We consider the
data arrival rate of the SU follows the Poisson process at rate $\lambda$
and the arrival of the PU follows Poisson process at rate $\xi$. The
service requirements of the SUs are i.i.d with exponential distribution
$\mu$. The First-Come-First-Served (FCFS) rule is applied for
determining the service order of SUs at the BS. The length of the service
time of the PU is also assumed to be exponentially distributed with
rate $\eta$. The above considered queue model is a common assumption in
some previous literatures, e.g. \cite{HLi} \cite{Do1}. \par

The state of the system at time $t$ is represented by a pair
$(N(t), I(t))$, where $N(t)$ is the length of queue, i.e., the
number of SUs in the system. $I(t)$ denotes the working status of
the BS, with $1$ standing for serving SUs and $0$ showing that the BS
is serving the PUs. So based on the system model, when $I(t) = 0$, we
also have $N(t)=0$. We assume when being successfully served by the BS, the SU can receive a
reward. While waiting in the queue, the delay cost of the SU is a function of its sojourn time. Based on
the total payoff which is the difference between the reward and the cost, SU can make the decision on
whether to join the queue and waiting for spectrum access or not. The payoff models are presented in the following.

\begin{figure}[t]
\centering
\includegraphics[height=4 cm, width=6cm]{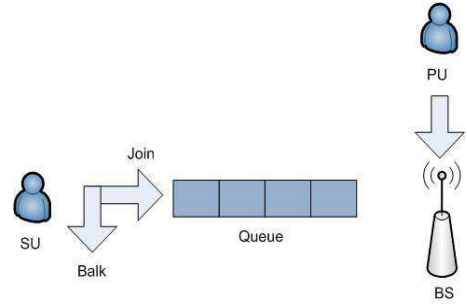}
\caption{System Model} \label{fig:1}
\end{figure}

\subsection{Profit Model}
We assume when being successfully served by the BS, the SU receives a
reward $\varpi_s$. We also assume that the cost of staying in the
queue is $\chi(T)$ where $T$ is the sojourn time in the system
representing the time that SU stays in the queue. $\chi(T)$ should be an
increasing function of $T$ and one simple linear example is that
$\chi(T) = C T$ where $C$ is the unit cost. Then the profit of the
SU can be given as

\begin{equation}
\label{eq:U} R := \varpi_{s} - CT.
\end{equation}

\noindent Accordingly, we can have the definition of the expected individual
profit

\theoremstyle{definition}
\newtheorem{definition}{Definition}
\begin{definition} When there are $n$ SUs in the queue, the expected profit of the arriving SU can be defined as
\begin{equation}
\label{eq:IP} U(n) :=  \theta_s (\varpi_s - C E(Q_n)),
\end{equation}
\noindent where $\theta_s$ is the probability that the SU can be
served and $E(Q_n)$ is the expected sojourn time related to $n$.
\end{definition}

\noindent Moreover, we can also define the social welfare as follows.

\begin{definition}
The social welfare of the considered system is given by
\begin{equation}
\label{eq:SPSO} S(q) := \lambda \rho_{s} \varpi_s - C E(N),
\end{equation}
\noindent where $\rho_{s}$ is the fraction of SUs that join the
queue and leave after being served. $E(N)$ is the mean number of
SUs.
\end{definition}

In the following, we consider the SUs are risk natural and try to
maximize their profits. We also assume the SUs are identical i.e.,
a symmetric game is assumed. In addition, we also consider that
when the SU enters the system at time $t$, all the system
parameters including profit model are known except the queue
length $N(t)$ and SU can only observe $I(t)$ upon arrival, i.e., a
partially observed queue is considered.

\section{Queueing Game for Spectrum Access: Equilibrium and Pricing}

\subsection{Stationary Probability and Expected Sojourn Time}

\begin{figure}[t]
\centering
\includegraphics[height=3cm, width=6cm]{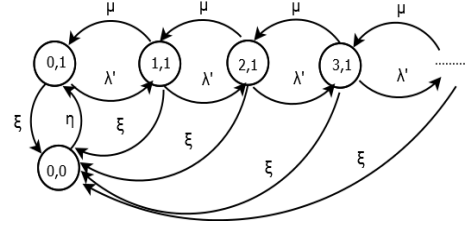}
\caption{Transition-rate diagram} \label{fig:2}
\end{figure}

After entering the system, the SU is able to decide whether to
join the queue waiting for spectrum access or not based on its own
observation, i.e. the profit. Thus, there are only two pure
strategies, join and balk, and a mixed strategy is specified by
the joining probability $q \in [0,1]$ of a SU that finds the BS is
not serving PU. Therefore, the effective arrival rate is
$\lambda^{'} = \lambda q$. For the considered system, we first
obtain the stationary probabilities $p(N(t), I(t))$. Based on the
transition rate diagram in Fig. \ref{fig:2}, we are able to obtain
the stationary probabilities $p(N(t), I(t))$ based on the
following balance equations,
\begin{equation}
\label{eq:SP1} \eta p(0,0)  =  \xi \sum_{n=0}^{+ \infty} p(n,1),
\end{equation}

\begin{equation}
\label{eq:SP1.1} (\lambda^{'} + \xi) p(0,1) =  \mu p(1,1) + \eta p
(0,0),
\end{equation}

\begin{equation}
\label{eq:SP1.2}  (\lambda^{'} + \xi + \mu) p(n,1) =  \lambda^{'}
p(n-1,1) + \mu p (n+1,1), \forall n \geq 1,
\end{equation}

\begin{equation}
\label{eq:SP1.3} p(0,0) + \sum_{n=0}^{+ \infty} p(n,1) = 1.
\end{equation}

To obtain $p(0,0)$ and $p(n,1)$, we have following observation.

\newtheorem{proposition}{Proposition}
\begin{proposition}
The stationary probability $p(0,0)$ and $p(n,1)$ can be given as
follows,

\begin{equation}
\begin{split}
\label{eq:SP11}
p(0,0) & =  \frac{\xi}{\eta+\xi}, \\
p(n,1) & =  \frac{\eta(1-x(\lambda^{'} ))x(\lambda^{'}
)^n}{\eta+\xi}
\end{split}
\end{equation}

\noindent where $x(\lambda^{'} )$ is

\begin{equation}
\begin{split}
\label{eq:SP2}
x(\lambda^{'} ) =  \frac{(\lambda^{'} + \mu + \xi)- \sqrt{(\lambda^{'} + \mu + \xi)^2 - 4 \lambda^{'}  \mu}}{2\mu}. \\
\end{split}
\end{equation}
\end{proposition}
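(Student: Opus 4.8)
The plan is to solve the balance equations \eqref{eq:SP1}--\eqref{eq:SP1.3} directly, exploiting the fact that the ``SU'' part of the chain behaves like a quasi-birth-death process with a geometric-type solution. First I would attack the interior equations \eqref{eq:SP1.2}: for $n\geq 1$ the recursion $(\lambda'+\xi+\mu)p(n,1)=\lambda' p(n-1,1)+\mu p(n+1,1)$ is a linear second-order difference equation with constant coefficients, so I would look for solutions of the form $p(n,1)=A z^n$. Substituting yields the characteristic equation $\mu z^2-(\lambda'+\xi+\mu)z+\lambda'=0$, whose two roots multiply to $\lambda'/\mu$ and sum to $(\lambda'+\xi+\mu)/\mu$. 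Since the probabilities must be summable, only the root inside the unit disk is admissible; a short check (evaluating the quadratic at $z=1$ gives $\mu-(\lambda'+\xi+\mu)+\lambda'=-\xi<0$, and the leading coefficient is positive) shows exactly one root lies in $(0,1)$, and it is precisely $x(\lambda')$ as written in \eqref{eq:SP2} (the smaller root, with the minus sign on the square root). So $p(n,1)=p(0,1)\,x(\lambda')^n$ for all $n\geq 0$, and I would verify that this form is also consistent with the boundary equation \eqref{eq:SP1.1} — indeed, plugging the geometric ansatz into \eqref{eq:SP1.1} reproduces the same characteristic relation, so no extra constraint beyond $n\geq1$ arises there once \eqref{eq:SP1} is used.

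Next I would pin down the two remaining unknowns $p(0,0)$ and $p(0,1)$ using the ``cut'' equation \eqref{eq:SP1} and the normalization \eqref{eq:SP1.3}. Equation \eqref{eq:SP1} reads $\eta p(0,0)=\xi\sum_{n\geq0}p(n,1)=\xi\,p(0,1)/(1-x(\lambda'))$. Combining this with \eqref{eq:SP1.3}, namely $p(0,0)+p(0,1)/(1-x(\lambda'))=1$, gives a $2\times2$ linear system in $p(0,0)$ and $S:=p(0,1)/(1-x(\lambda'))$: from \eqref{eq:SP1} we get $\eta p(0,0)=\xi S$ and from \eqref{eq:SP1.3} we get $p(0,0)+S=1$. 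Solving, $p(0,0)=\xi/(\eta+\xi)$ and $S=\eta/(\eta+\xi)$, whence $p(0,1)=\eta(1-x(\lambda'))/(\eta+\xi)$ and therefore $p(n,1)=\eta(1-x(\lambda'))x(\lambda')^n/(\eta+\xi)$, exactly the claimed \eqref{eq:SP11}. I would close by noting $x(\lambda')\in(0,1)$ guarantees the geometric series converges and all stated probabilities are nonnegative, so the expressions indeed define the (unique) stationary distribution.

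The main obstacle — though it is more a matter of care than of depth — is the root selection in the characteristic equation and checking that the geometric ansatz is globally valid rather than only on $n\geq1$. One has to confirm that the boundary equations \eqref{eq:SP1.1} and \eqref{eq:SP1} do not force a second geometric term or a correction at $n=0$; the clean way to see this is to observe that \eqref{eq:SP1.1} is just the $n=0$ instance of \eqref{eq:SP1.2} after substituting $\eta p(0,0)$ from \eqref{eq:SP1} and using $\sum_{n\geq0}p(n,1)=p(0,1)/(1-x)$, which is automatically consistent with the single-geometric form. The only genuine subtlety is discarding the root larger than $1$ on non-summability grounds, and verifying via the sign argument above that the admissible root is the one with the minus sign, matching \eqref{eq:SP2}.
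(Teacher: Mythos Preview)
Your proof is correct and follows essentially the same route as the paper's own argument in Appendix~A: solve the order-two linear recursion \eqref{eq:SP1.2} via its characteristic equation, discard the root exceeding~$1$ by summability, and then fix the remaining constant from the normalization \eqref{eq:SP1.3} together with \eqref{eq:SP1}. Your treatment is in fact more careful than the paper's, since you explicitly verify that the boundary equation \eqref{eq:SP1.1} is automatically consistent with the single-geometric ansatz and give a clean sign argument for why the minus-sign root is the one in $(0,1)$.
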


\begin{proof}
The proof is given in Appendix A.
\end{proof}

In order to achieve the expected profit, from (\ref{eq:IP}),
$\theta_s$ and $E(Q_n)$ should be obtained at the first place.
Consider a SU enters the network at state $(n, 1)$ upon arrival
and decides to join the queue. This SU may leave either due to
service completion or due to a PU entering. For its service
completion, the SU has to wait for a sum of $n + 1$ independent
exponentially distributed times with parameter $\mu$. For the case
that the PU enters, the SU has to wait for an exponentially distributed
time with parameter $\xi$. Hence, the sojourn time of the SU in the
system is given as $N = \min(L_n, Q)$, where $L_n$ follows a
Gamma distribution with parameters $n+1$ and $\mu$. $Q$ is an
exponentially distributed random variable with rate $\xi$, and $Q$
is independent of $L_n$. Therefore, we have $\theta_s=Pr(L_n <
Q)$. To this end, when considering $N(t)$ and $I(t)$ are known to the SU,  we can obtain the $\theta_s$ and $E(Q_n)$ in
(\ref{eq:IP}) as

\begin{equation}
\begin{split}
\label{eq:A1} \theta_s & = \left(\frac{\mu}{\mu +
\xi}\right)^{n+1}, \\
E(Q_n) & = \frac{1}{\xi}\left( 1 - \theta_s \right)^{n+1},
\end{split}
\end{equation}

\noindent where $n$ is the number of SUs in the queue. Then, we
can use (\ref{eq:SP1}) and (\ref{eq:A1}) to address $U(n)$ in
(\ref{eq:IP}) and then find the expected profit of a SU that enters
the queue with a certain probability.

\subsection{Individually Equilibrium Strategy}

With the results of the stationary probability and the expected
sojourn time, the expected profit of a SU that enters the queue
with probability $\tilde{q}$ can be achieved as following,

\begin{proposition}
When there is frequency bandwidth available for the SUs, the expected
profit of a SU that enters the queue with probability $\tilde{q}$
given that other SUs join the queue with probability $q$ is given
by
\begin{equation}
\label{eq:ProfitExpected}
\Gamma(\tilde{q},q) = \tilde{q}
\left[\left( \varpi_s + \frac{C}{\xi} \right)
\frac{\mu(1-x(\lambda^{'}))}{\mu+\xi-\mu x(\lambda^{'})}-
\frac{C}{\xi}\right]
\end{equation}
\end{proposition}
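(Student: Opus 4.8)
The plan is to compute the expected profit $\Gamma(\tilde q, q)$ directly by conditioning on the state of the system that an arriving SU observes, and then averaging the per-state profit $U(n)$ from Definition~1 against the conditional distribution of the queue length given that the BS is available (i.e.\ given $I=1$). First I would note that an arriving SU who finds the BS serving PUs (state $(0,0)$) cannot join at all, so the only contributions come from states $(n,1)$, $n\ge 0$. By PASTA, the probability that an arriving SU sees state $(n,1)$ conditioned on the BS being available is $p(n,1)/\sum_{m\ge 0} p(m,1)$; using Proposition~1 this conditional distribution is exactly the geometric law $(1-x(\lambda'))x(\lambda')^{\,n}$, where $\lambda' = \lambda q$.

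Next I would substitute the state-dependent quantities $\theta_s = (\mu/(\mu+\xi))^{n+1}$ and $E(Q_n) = \frac1\xi(1-\theta_s)$ from~(\ref{eq:A1}) into $U(n) = \theta_s(\varpi_s - C E(Q_n))$. A convenient regrouping is
\begin{equation}
U(n) = \theta_s\Bigl(\varpi_s + \tfrac{C}{\xi}\Bigr) - \tfrac{C}{\xi}\,\theta_s \cdot\bigl(1-(1-\theta_s)\bigr)?\notag
\end{equation}
— more carefully, $U(n) = \theta_s(\varpi_s + C/\xi) - (C/\xi)\theta_s$ once one expands $E(Q_n)=\frac1\xi(1-\theta_s)$ so that $-C\theta_s E(Q_n) = -\frac{C}{\xi}\theta_s + \frac{C}{\xi}\theta_s^2$; I would keep whichever algebraic form makes the geometric sum cleanest. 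Then $\Gamma(\tilde q,q) = \tilde q \sum_{n\ge 0}(1-x(\lambda'))x(\lambda')^{\,n}\,U(n)$, and the sum splits into geometric series of the form $\sum_{n\ge 0} x^{\,n} r^{\,n+1}$ with $r = \mu/(\mu+\xi)$ (and also with $r^2$), each evaluated in closed form as $r/(1-rx)$ etc. Collecting the terms should give the factor $\dfrac{\mu(1-x(\lambda'))}{\mu+\xi-\mu x(\lambda')}$ multiplying $\bigl(\varpi_s + C/\xi\bigr)$, with the residual $-C/\xi$ coming from the constant part; this is precisely~(\ref{eq:ProfitExpected}).

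The step I expect to require the most care is the bookkeeping in the geometric summation: one must verify that the $\theta_s^2$-type term either cancels or combines so that only the single closed-form ratio survives, and that the normalizing factor $\eta/(\eta+\xi)$ from $p(n,1)$ cancels against $\sum_m p(m,1) = \eta/(\eta+\xi)$ when we pass to the conditional distribution. A secondary subtlety is justifying that only $\tilde q$ (not $q$) appears as the outermost linear factor while $q$ enters solely through $\lambda' = \lambda q$ inside $x(\lambda')$: this follows because the tagged SU's decision is a single Bernoulli$(\tilde q)$ trial that does not alter the stationary environment generated by the other SUs, and because an SU who balks earns $0$. I would also remark that the convergence of all series is guaranteed by $x(\lambda')\in(0,1)$, which follows from~(\ref{eq:SP2}) since the discriminant there is positive and less than $(\lambda'+\mu+\xi)^2$.
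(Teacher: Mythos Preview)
Your overall strategy---condition on $I=1$, use PASTA to get the geometric law $(1-x(\lambda'))x(\lambda')^n$ for the queue length, then sum $U(n)$ against it and multiply by $\tilde q$---is exactly what the paper does in Appendix~B. The normalization by $\eta/(\eta+\xi)$ indeed cancels, and the linear dependence on $\tilde q$ for the reason you give is correct.

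The one genuine gap is the form of $U(n)$ you start from. You take Definition~1 literally, $U(n)=\theta_s\bigl(\varpi_s - C\,E(Q_n)\bigr)$, and this is what produces the $\theta_s^2$ term you are worried about. That term does \emph{not} cancel: if you carry it through you get an extra summand $\tfrac{C}{\xi}\,\dfrac{\mu^2(1-x)}{(\mu+\xi)^2-\mu^2 x}$, which is not absorbed by anything else and does not reduce to~(\ref{eq:ProfitExpected}). The resolution is that Definition~1 as printed is inconsistent with the profit model $R=\varpi_s - CT$ in~(\ref{eq:U}): the waiting cost $CT$ is incurred whether or not the SU is eventually served, so the correct expected profit is
\[
U(n)\;=\;\theta_s\,\varpi_s \;-\; C\,E(Q_n)\;=\;\Bigl(\varpi_s+\tfrac{C}{\xi}\Bigr)\theta_s\;-\;\tfrac{C}{\xi},
\]
which is exactly the expression the paper actually uses in~(\ref{eq:A3}). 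With this $U(n)$ the function is \emph{affine} in $\theta_s=(\mu/(\mu+\xi))^{n+1}$, so only the single geometric sum $\sum_n(1-x)x^n\theta_s = \mu(1-x)/(\mu+\xi-\mu x)$ appears, the constant $-C/\xi$ passes through the probability-weighted sum unchanged, and~(\ref{eq:ProfitExpected}) drops out immediately. In short: replace $\theta_s(\varpi_s - C E(Q_n))$ by $\theta_s\varpi_s - C E(Q_n)$ and your proof goes through cleanly; as written, the $\theta_s^2$ bookkeeping you flagged would not close.
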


\begin{proof}
The proof is given in Appendix B.
\end{proof}

We can now process to find out the individually
equilibrium (IE) strategy of a SU and we have the following,

\theoremstyle{theorem}
\newtheorem{theorem}{Theorem}
\begin{theorem}
In the considered model, a unique IE mixed strategy exists, with
joining probability $q_e$ given by

\begin{equation}
\label{eq:NE}
 q_{e}=
\begin{cases}
1, & \varpi_s \in [\frac{C}{\mu(1-\kappa )},+\infty), \\
\Theta, &\varpi_s \in (\frac{C}{\mu},\frac{C}{\mu(1-\kappa )}),\\
0, &\text{Otherwise.}\\
\end{cases}
\end{equation}

\noindent where $\Theta = \frac{(\mu \varpi_s -C)(C + \xi
\varpi_s)}{\lambda C \varpi_s }$ and $\kappa = x(\lambda)$.

\end{theorem}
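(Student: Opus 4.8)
The plan is to analyze the best-response structure implied by Proposition~2. Fix the strategy $q$ of the other SUs, so that $\lambda' = \lambda q$ and hence $x(\lambda')$ is determined. From \eqref{eq:ProfitExpected}, $\Gamma(\tilde q, q)$ is linear in $\tilde q$ with slope
\[
g(q) := \left(\varpi_s + \frac{C}{\xi}\right)\frac{\mu\bigl(1-x(\lambda q)\bigr)}{\mu+\xi-\mu x(\lambda q)} - \frac{C}{\xi}.
\]
A tagged SU prefers to join iff $g(q) > 0$, is indifferent iff $g(q) = 0$, and prefers to balk iff $g(q) < 0$. Hence the equilibrium conditions are the standard ones: $q_e = 1$ if $g(1) \ge 0$; $q_e = 0$ if $g(0) \le 0$; and otherwise $q_e$ solves $g(q_e) = 0$ with $q_e \in (0,1)$. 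The first step is therefore to record these three cases, and the bulk of the work is to translate each inequality/equation on $g$ into the stated inequality on $\varpi_s$ and into the closed form $\Theta$.

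The second step is to establish monotonicity of $g$ in $q$, which gives both uniqueness of the interior root and the fact that the three regimes are contiguous intervals in $\varpi_s$. I would show $x(\lambda')$ is strictly increasing in $\lambda'$ (differentiate \eqref{eq:SP2}, or note it is the smaller root of $\mu x^2 - (\lambda'+\mu+\xi)x + \lambda' = 0$ and argue by the implicit function theorem, using that this root lies in $(0,1)$), hence increasing in $q$. Then observe that the map $x \mapsto \frac{\mu(1-x)}{\mu+\xi-\mu x}$ is strictly decreasing on $[0,1)$ (its derivative is $-\frac{\mu\xi}{(\mu+\xi-\mu x)^2} < 0$), so $g$ is strictly decreasing in $q$. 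Strict monotonicity of $g$ gives at most one interior fixed point, and continuity of $g$ on $[0,1]$ (it is continuous since $x(\cdot)$ is) together with the intermediate value theorem gives existence whenever $g(1) < 0 < g(0)$; this yields the claimed unique IE strategy.

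The third step is the explicit computation. Evaluating at $q=1$ gives $x(\lambda) = \kappa$, and $g(1) \ge 0$ rearranges — using $\varpi_s + C/\xi > 0$ — to $\bigl(\varpi_s + \frac C\xi\bigr)\mu(1-\kappa) \ge \frac C\xi(\mu+\xi-\mu\kappa)$, i.e.\ $\varpi_s \mu(1-\kappa) \ge C$, i.e.\ $\varpi_s \ge \frac{C}{\mu(1-\kappa)}$, the top branch. At $q=0$ we have $\lambda'=0$, so $x(0)=0$ (the smaller root of $\mu x^2-(\mu+\xi)x=0$), and $g(0) \le 0$ becomes $\bigl(\varpi_s+\frac C\xi\bigr)\frac{\mu}{\mu+\xi} \le \frac C\xi$, i.e.\ $\varpi_s \le \frac C\mu$, the bottom branch. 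For the middle regime, setting $g(q_e)=0$ gives $\bigl(\varpi_s+\frac C\xi\bigr)\mu(1-x) = \frac C\xi(\mu+\xi-\mu x)$, which solves for $x = x(\lambda q_e)$ as a function of $\varpi_s$; then invert the quadratic relation defining $x(\lambda')$ — namely $\lambda' = \frac{\mu x^2 + (\mu+\xi)x \cdot 0 \dots}{}$, more precisely $\lambda q_e$ satisfies $\mu x^2 - (\lambda q_e + \mu+\xi)x + \lambda q_e = 0$, so $\lambda q_e = \frac{\mu x^2 - (\mu+\xi)x}{x-1} = \mu x - \frac{\xi x}{1-x}$ — and substitute the expression for $x$ to obtain $q_e = \Theta = \frac{(\mu\varpi_s - C)(C+\xi\varpi_s)}{\lambda C \varpi_s}$.

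The main obstacle I anticipate is purely the algebraic inversion in the third step: going from the indifference equation $g(q_e)=0$ to the clean closed form $\Theta$ requires carefully eliminating $x$ between the linear-in-$\varpi_s$ relation it satisfies and the quadratic relation tying it to $\lambda q_e$, and then checking that the resulting $\Theta$ indeed lands in $(0,1)$ precisely on the interval $\varpi_s \in \bigl(\frac C\mu, \frac{C}{\mu(1-\kappa)}\bigr)$ — this last consistency check is what ties the three branches together and confirms there are no gaps or overlaps. Everything else (linearity of $\Gamma$ in $\tilde q$, monotonicity of $g$, existence/uniqueness via IVT) is routine.
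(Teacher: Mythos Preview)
Your proposal is correct and follows essentially the same route as the paper: identify the equilibrium via the sign of $\Gamma(1,q)$, solve $\Gamma(1,q)=0$ for $x=1-\tfrac{C}{\mu\varpi_s}$, invert the characteristic quadratic to express $\lambda q_e$ in terms of $x$, and use the monotonicity of $x(\lambda q)$ in $q$ to delimit the three regimes. Your version is in fact more carefully argued (you make the decreasing-in-$q$ property of $g$ explicit, which the paper only uses implicitly); one small slip to fix in the write-up is the sign in your inversion, where $\lambda q_e = \mu x + \tfrac{\xi x}{1-x}$ rather than $\mu x - \tfrac{\xi x}{1-x}$.
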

\begin{proof}
The proof is given in Appendix C.
\end{proof}

From \textbf{Theorem 1}, we can observe that the IE is independent
of the PU's service time. This is because the SU makes decision
only when PU is not being served and the SU can not be aware of
the PU's information. However, the socially equilibrium strategies
of the SUs do depend on the pricing strategies of the BS as well
as the PU's arrival rate. The IE does not imply the social welfare
optimality. Thus, in the following, we investigate the social
optimal strategy.

\subsection{Social Optimization}

Based on the definition of social welfare and previous results of
the stationary probabilities, we can arrive at the following
proposition,

\begin{proposition}
The expected social profit, given that the SUs follow a mixed
strategy with probability $q$ of joining (i.e. arriving
SUs that find an BS not serving the PU enters with probability $q$,
while the rest choose to balk without being served) is given by

\begin{equation}
\label{eq:ProfitSO} S(q) = \frac{\eta x(\lambda^{'}) [\mu \varpi_s
(1-x(\lambda^{'})) - C]}{(\xi + \eta)(1-x(\lambda^{'}))}.
\end{equation}

\end{proposition}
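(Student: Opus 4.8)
The plan is to compute the social welfare $S(q) := \lambda \rho_s \varpi_s - C E(N)$ from Definition 3 by evaluating the two quantities $\rho_s$ (the effective throughput of served SUs, normalized) and $E(N)$ (mean number of SUs in the system) directly from the stationary probabilities $p(0,0)$ and $p(n,1)$ given in Proposition 1. Writing $x = x(\lambda^{'})$ for brevity, recall from (\ref{eq:SP11}) that $p(n,1) = \frac{\eta(1-x)x^n}{\eta+\xi}$, so the $p(n,1)$ form a geometric sequence scaled by $\frac{\eta}{\eta+\xi}$. This is the engine of the whole computation.

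\textbf{Step 1: Compute $E(N)$.} Since $N(t)=0$ whenever $I(t)=0$, I have $E(N) = \sum_{n=0}^{\infty} n\, p(n,1) = \frac{\eta(1-x)}{\eta+\xi} \sum_{n=0}^{\infty} n x^n = \frac{\eta(1-x)}{\eta+\xi} \cdot \frac{x}{(1-x)^2} = \frac{\eta x}{(\eta+\xi)(1-x)}$. This already matches the denominator structure $(\xi+\eta)(1-x(\lambda^{'}))$ appearing in (\ref{eq:ProfitSO}), which is a good sign.

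\textbf{Step 2: Compute the served-SU throughput term $\lambda \rho_s \varpi_s$.} The rate at which SUs are served to completion equals the total service-completion rate, $\mu \sum_{n=1}^{\infty} p(n,1) = \mu \cdot \frac{\eta(1-x)}{\eta+\xi} \cdot \frac{x}{1-x} = \frac{\mu \eta x}{\eta+\xi}$. This is precisely $\lambda \rho_s$ (the rate of SUs who join and eventually depart after service), so $\lambda \rho_s \varpi_s = \frac{\mu \eta x \varpi_s}{\eta+\xi}$. Alternatively one can get the same quantity via the joining rate $\lambda^{'}(1-p(0,0)\text{-type correction})$ times $\theta_s$ summed against the queue-length distribution and using the quadratic identity $\mu x^2 - (\lambda^{'}+\mu+\xi)x + \lambda^{'} = 0$ satisfied by $x(\lambda^{'})$ from (\ref{eq:SP2}); the service-rate route is cleaner.

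\textbf{Step 3: Combine.} Subtracting, $S(q) = \frac{\mu \eta x \varpi_s}{\eta+\xi} - \frac{C \eta x}{(\eta+\xi)(1-x)} = \frac{\eta x}{(\eta+\xi)(1-x)}\big[\mu \varpi_s (1-x) - C\big]$, which is exactly (\ref{eq:ProfitSO}). \textbf{The main obstacle} I anticipate is not the geometric sums but making the accounting of $\rho_s$ airtight: one must argue carefully that in steady state the fraction of arriving SUs who join \emph{and} leave after being served, weighted appropriately, yields a completion rate equal to the aggregate service rate $\mu\sum_{n\ge 1}p(n,1)$ — i.e. that every SU who is ever served to completion is counted once and SUs flushed by a PU arrival contribute nothing to $\rho_s$, consistent with the reward model in (\ref{eq:U}). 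Once that bookkeeping is pinned down (it follows from rate-conservation / PASTA-type reasoning on the chain in Fig. \ref{fig:2}), the algebra collapses immediately to the stated closed form.
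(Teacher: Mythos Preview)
Your proof is correct and arrives at the same closed form, but your Step~2 takes a slightly different (and cleaner) route than the paper. The paper computes $\rho_s$ from the SU's viewpoint: it conditions on the queue length $n$ seen by an arriving SU (via PASTA), multiplies by the joining probability $q$ and the success probability $\theta_s=(\mu/(\mu+\xi))^{n+1}$ from (\ref{eq:A1}), and sums, giving $\rho_s=\sum_n p(n,1)\,q\,(\mu/(\mu+\xi))^{n+1}$. After the geometric sum this yields $\lambda\rho_s=\lambda' \eta\mu(1-x)/[(\eta+\xi)(\mu+\xi-\mu x)]$, which collapses to $\mu\eta x/(\eta+\xi)$ only after invoking the characteristic identity $\lambda'(1-x)=x(\mu+\xi-\mu x)$ satisfied by $x(\lambda')$. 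You instead compute $\lambda\rho_s$ as the steady-state service-completion rate $\mu\sum_{n\ge 1}p(n,1)$, which gives $\mu\eta x/(\eta+\xi)$ in one line with no appeal to the quadratic. Both are valid; your rate-conservation argument is more direct, while the paper's conditioning argument ties more explicitly to the reward structure in Definition~1 and makes the ``only served SUs earn $\varpi_s$'' bookkeeping you flag as the main obstacle transparent by construction.
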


\begin{proof}
To address the social welfare (\ref{eq:SPSO}), we need to find
$E(N)$ and $\rho_s$. Using (\ref{eq:SP1}) and (\ref{eq:A1}), we
can obtain $\rho_s$ and $E(N)$ as

\begin{equation}
\begin{split}
\label{eq:C1} &\rho_s  = \sum_n p(n,1) q \left(\frac{\mu}{\mu +
\xi}\right)^{(n+1)}, \\
& E(N)  = \sum_n n p(n,1). \\
\end{split}
\end{equation}

Then, obtaining the geometric sums in (\ref{eq:C1}) and through
(\ref{eq:SPSO}) we can arrive (\ref{eq:ProfitSO}).
\end{proof}

With the observation in (\ref{eq:ProfitSO}), we can obtain the
socially optimal strategy, which can be found in the following
theorem.

\begin{theorem}
In the considered model, a unique socially optimal strategy exists
with probability $q_{s}$ of joining the queue which can be
expressed as
\begin{equation}
\label{eq:SOeq}
 q_{s}=
\begin{cases}
1, & \varpi_s \in [\frac{C}{\mu(1-\kappa )^2},+\infty), \\
\Phi, &\varpi_s \in (\frac{C}{\mu},\frac{C}{\mu(1-\kappa)^2}),\\
0, &\text{Otherwise.}\\
\end{cases}
\end{equation}

\noindent where $\Phi = \frac{\sqrt{\vartheta}(\mu \varpi_s -
\sqrt{\vartheta})(\xi \varpi_s + \sqrt{\vartheta})}{\lambda
\vartheta \varpi_s}$ and $\vartheta = \mu \varpi_s C$.

\end{theorem}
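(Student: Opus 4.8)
The plan is to turn the maximization of $S(q)$ into a one‑dimensional concave program in the variable $x=x(\lambda^{'})$. The first step is to record the relevant properties of the map defined in (\ref{eq:SP2}): $x(\lambda^{'})$ is precisely the root in $(0,1)$ of the quadratic $\mu x^2-(\lambda^{'}+\mu+\xi)x+\lambda^{'}=0$, and solving this relation for $\lambda^{'}$ gives the explicit inverse $\lambda^{'}=x(\mu+\xi-\mu x)/(1-x)$. Since the right‑hand side of this inverse has derivative $[\mu(1-x)^2+\xi]/(1-x)^2>0$ on $[0,1)$, the map $\lambda^{'}\mapsto x(\lambda^{'})$ is a strictly increasing continuous bijection from $[0,\infty)$ onto $[0,1)$ with $x(0)=0$. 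Consequently, as the control $q$ ranges over $[0,1]$, the quantity $x:=x(\lambda q)$ ranges bijectively and increasingly over $[0,\kappa]$ with $\kappa=x(\lambda)$, and any maximizer of $S$ in $q$ corresponds to a unique maximizer in $x$ and vice versa.

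Substituting $x=x(\lambda q)$ into (\ref{eq:ProfitSO}) and simplifying the quotient, the social welfare takes the form $S(q)=\frac{\eta}{\xi+\eta}\,g(x)$ with $g(x):=\mu\varpi_s x-\frac{Cx}{1-x}$, so it suffices to maximize $g$ over $x\in[0,\kappa]$. Differentiating, $g'(x)=\mu\varpi_s-\frac{C}{(1-x)^2}$ and $g''(x)=-\frac{2C}{(1-x)^3}<0$ on $[0,1)$, hence $g$ is strictly concave there; a continuous strictly concave function on the compact interval $[0,\kappa]$ has a unique maximizer, which (through the bijection above) pulls back to a unique socially optimal $q_s$, yielding the asserted existence and uniqueness.

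The unconstrained critical point of $g$ is $x^{*}=1-\sqrt{C/(\mu\varpi_s)}$. Comparing $x^{*}$ with the endpoints of $[0,\kappa]$ yields the three cases: (i) $x^{*}\le 0$, equivalently $\varpi_s\le C/\mu$, so $g$ is non‑increasing on $[0,\kappa]$ and the maximizer is $x=0$, i.e. $q_s=0$; (ii) $x^{*}\ge\kappa$, equivalently $\varpi_s\ge C/(\mu(1-\kappa)^2)$ since $1-\kappa=\sqrt{C/(\mu\varpi_s)}$ at equality, so $g$ is non‑decreasing on $[0,\kappa]$ and the maximizer is $x=\kappa$, i.e. $q_s=1$; and (iii) otherwise $x^{*}\in(0,\kappa)$ is the interior maximizer. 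In case (iii) I would substitute $x=x^{*}$ into the inverse relation $\lambda q_s=x^{*}(\mu+\xi-\mu x^{*})/(1-x^{*})$ and simplify, writing $1-x^{*}=\sqrt{C/(\mu\varpi_s)}=C/\sqrt{\vartheta}$ with $\vartheta=\mu\varpi_s C$, to obtain $q_s=\Phi$ in the stated form.

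The only step that requires genuine computation is this final algebraic reduction to $\Phi=\frac{\sqrt{\vartheta}(\mu\varpi_s-\sqrt{\vartheta})(\xi\varpi_s+\sqrt{\vartheta})}{\lambda\vartheta\varpi_s}$, which is bookkeeping once one repeatedly uses $\mu\varpi_s C=\vartheta$ to pass between $C$, $\mu\varpi_s$ and $\sqrt{\vartheta}$; for instance $\mu+\xi-\mu x^{*}=\xi+\mu C/\sqrt{\vartheta}=(\xi\varpi_s+\sqrt{\vartheta})/\varpi_s$, and $(\sqrt{\vartheta}-C)\sqrt{\vartheta}=(\mu\varpi_s-\sqrt{\vartheta})C$ shows the two displayed forms of the interior solution agree. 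The conceptual heart of the argument is simply the observation that $S$ depends on $q$ only through $x(\lambda q)$, combined with the monotone reparametrization; everything after that — strict concavity of $g$, the location of $x^{*}$, and the endpoint comparisons — is routine, and I do not anticipate a real obstacle beyond careful case handling at the two threshold values of $\varpi_s$.
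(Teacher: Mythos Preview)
Your proposal is correct and follows essentially the same route as the paper's proof: both reparametrize $S(q)$ through $x=x(\lambda q)$, reduce to optimizing the one-variable function $f(x)=\frac{\eta}{\xi+\eta}\bigl[\mu\varpi_s x-\tfrac{Cx}{1-x}\bigr]$ on $[0,\kappa]$, locate the critical point $x^{*}=1-\sqrt{C/(\mu\varpi_s)}$, and split into three cases by comparing $x^{*}$ with the endpoints. Your version is in fact slightly more careful than the paper's, since you establish the monotone bijection $q\leftrightarrow x$ explicitly via the inverse map and justify uniqueness through the strict concavity $g''<0$, whereas the paper passes over these points quickly.
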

\begin{proof}
The proof is given in Appendix D.
\end{proof}

\subsection{Optimal Pricing}

We have obtained the socially optimal strategy as well as the
individual equilibrium strategy. Moreover, it can be observed that
the socially optimal joining probability $q_s$ is always smaller
than the individual one $q_e$, which can also be found in Fig. \ref{fig:3}. To oblige the SUs to adopt the
socially optimal strategy, one approach is to apply a pricing
mechanism to reduce the individually optimal threshold $q_e$
\cite{HLi}. In this work, we consider that the BS will act as an
agent to impose an admission fee, which is a constant given the
arrival rate, service pattern, reward and cost. The admission fee
is to force the individually equilibrium probability to equal with
the social optimal one. \par

When the admission fee is considered, the expected profit of a SU
is given as $U(n,p) = \theta_s (\varpi_s - C E(Q_n)-p)$. It can be
observed that when imposing an admission fee, the social profit
remains the same as it implies a transfer of income from one group
to another. Thus, through \ref{eq:A2}, we can obtain $\Gamma(q_s,
q_s, p)$. We further denote that the equilibrium probability of
joining by $q_e(p)$. Then the optimal fee $p^\ast$ should satisfy
$q_s = q_e(p^\ast)$. As the monopoly market is considered here,
and a monpoly does not allow the a positive user surplus since in
such a case, the price can be increased without reducing $q$.
Therefore, the $p^{\ast}$ can be arrived by

\begin{equation}
 p^{\ast}= \{p \vert \Gamma_p(q_s, q_s,p) = 0\},
\end{equation}

\section{Simulation Results}
The numerical parameters are $\lambda = 7, \xi = 0.5, \mu = 3,
\eta=2, C = 2$. First, in Fig. \ref{fig:3}, we can see that the
socially optimal joining probability $q_s$ is always smaller than
the individual one $q_e$, and inherently, there is a gap between
the individually equilibrium arrival rate and social arrival rate
as the equilibrium arrival rate and social arrival rate can be
given as $\lambda_{e/s} = \lambda q_{e/s}$. Therefore, we can see
that by imposing appropriate admission fee, the arrival rate of
SUs can be regulated. In Fig. \ref{fig:4}, we can see that the
social benefit when $q_s$ is used is higher than the one when
$q_e$ is considered. Comparing Fig. \ref{fig:3} and Fig.
\ref{fig:4}, we can see that only when $q_e=1$, the social benefit
is higher than $0$, otherwise it remains lower or equal to $0$.
Meanwhile, the social benefits when $q_s$ is considered is always
above $0$ and comparable high, which indicates that appropriate
admission fee can improve the social benefits of the considered
system.

\begin{figure}[t]
\centering
\includegraphics[height=6cm, width=8cm]{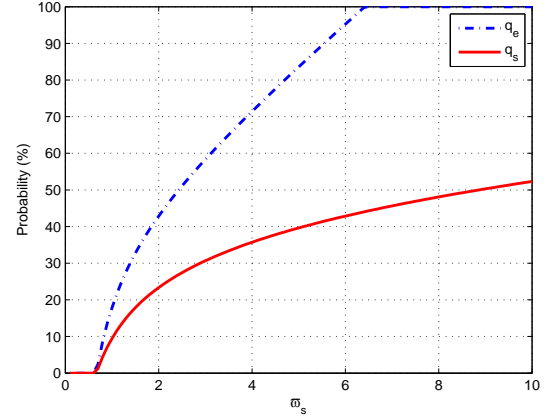}
\caption{The equilibrium joining probability vs. reward} \label{fig:3}
\end{figure}

\begin{figure}[t]
\centering
\includegraphics[height=6cm, width=8cm]{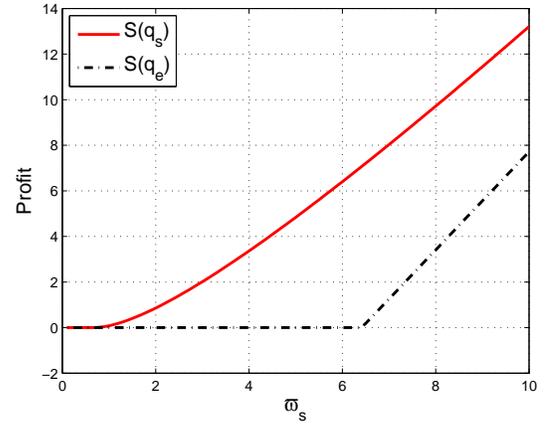}
\caption{The profits vs. reward} \label{fig:4}
\end{figure}

%

\section{Conclusion}
In this paper, we have investigated the problem of spectrum access
decision-making for the SUs in the CRNs. Utilizing
the queueing game approaches, the decision-making process of the SUs that
whether to queue or not is studied in present of arrival of the PUs.
Both individual equilibrium and social optimization strategies are
derived. Moreover, the optimal pricing of the service provider is also
investigated. Our proposed algorithms and corresponding analysis
are validated through simulation studies.

\section*{Appendix A}

From (\ref{eq:SP1})-(\ref{eq:SP1.3}), we can observe
$p(0,0) = \frac{\xi}{\eta+\xi}$. To obtain $p(n,1)$, the similar
approach used in \cite[pp. 578]{Boudali} can be applied. One can notice
that (\ref{eq:SP1.2}) can be considered as a homogeneous linear
difference equation of order $2$, which is with constant
coefficients and characteristic equation

\begin{equation}
\label{eq:O1} (\lambda^{'} + \mu + \xi)y = \lambda^{'}+ \mu y^2.
\end{equation}

\noindent (\ref{eq:O1}) have too roots, $x^{'}(\lambda^{'})$ and
$x(\lambda^{'})$, which are

\begin{equation}
\begin{split}
\label{eq:O2}
x^{'}(\lambda^{'} ) & =  \frac{(\lambda^{'} + \mu + \xi) + \sqrt{(\lambda^{'} + \mu + \xi)^2 - 4 \lambda^{'}  \mu}}{2\mu}, \\
x(\lambda^{'} ) & =  \frac{(\lambda^{'} + \mu + \xi) -\sqrt{(\lambda^{'} + \mu + \xi)^2 - 4 \lambda^{'}  \mu}}{2\mu}. \\
\end{split}
\end{equation}

From the standard theory of homogeneous linear difference
equations (see e.g. \cite[Sec 2.3]{Elaydi}), we conclude that

\begin{equation}
\label{eq:O3} p(n,1) = c^{'}x^{'}(\lambda^{'} )^n + c
x(\lambda^{'} )^n,
\end{equation}

\noindent where $c^{'}$ and $c$ are constants. We can easily see
 that $x^{'} > 1$, thus, $c^{'}$ should be necessarily $0$. The constant $c$ can be
calculated using (\ref{eq:SP1.3}) and \textbf{Proposition 1} can
be approved.

\section*{Appendix B}

To prove Proposition 2, first we have the expected profit when a
SU is able to observe $n$ SUs and queue state in the system upon
arrival (i.e. observed queue) and decide to enter, which can be
derived from (\ref{eq:A1}),

\begin{equation}
\label{eq:A3} U(n)= \varpi_s\left( \frac{\mu}{\mu+\xi}
\right)^{(n+1)} - \frac{C}{\xi} + \frac{C}{\xi}\left(
\frac{\mu}{\mu+\xi} \right)^{(n+1)}.
\end{equation}

When a SU decides to join given that the BS is serving SUs and
others are following strategy $q$, the expected profit is

\begin{equation}
\begin{split}
\label{eq:A2}
\Gamma(1,q) &= \sum_{n=0}^{+\ \infty}p_{1}(n,1) U(n) \\
&= \sum_{n=0}^{+ \infty} \frac{p(n,1)}{\sum_{i=0}^{+ \infty}p(i,1)} U(n) \\
&= \left( \varpi_s+ \frac{C}{\xi} \right)
\frac{\mu(1-x(\lambda^{'}))}{\mu+\xi-\mu x(\lambda^{'})} - \frac{C}{\xi}, \\
\end{split}
\end{equation}

\noindent where $p_{1}(n,1) = \frac{p(n,1)}{\sum_{i=0}^{+
\infty}p(i,1)}$ is the probability that there are $n$ SUs in the
queue when a SU arrives. As we have that $\Gamma(\tilde{q},q) =
(1-\tilde{q})\Gamma(0,q)+ \tilde{q}\Gamma(1,q)$,
(\ref{eq:ProfitExpected}) can be arrived and \textbf{Proposition
2} can be proved.

\section*{Appendix C}

For a SU, it will prefer to enter the queue if its expected profit
after entering $\Gamma(1,q)>0$, and balk if it $\Gamma(1,q)<0$.
Considering $\Gamma(1,q)=0$, we can solve for $x(\lambda^{'})$ and
the unique solution is $x = 1- \frac{C}{\mu \varpi_s }$. \par

We notice that $x(\lambda^{'}), \lambda^{'} = \lambda q$ is
one root of (\ref{eq:O1}). Then we are able to solve
$(\ref{eq:O1})$ with respect to $q$, which yields

\begin{equation}
\label{eq:B2} q_e = \frac{x[\mu(1-x)+\xi]}{\lambda(1-x)}=
\frac{(\mu \varpi_s -C)(C + \xi \varpi_s)}{\lambda C \varpi_s }.
\end{equation}

We can also observe that $x(\lambda^{'})$ is a strictly
increasing function for $q \in [0,1]$ as $\frac{dx(\lambda^{'})}{d
q} >0$. Thus, $ q_e \in (0,1)$ iff $x(\lambda q_e) \in
(0,\kappa)$, where $\kappa = x(\lambda)$. In other word, $q_e$ is
in the interval $(0,1)$ iff $\varpi_s \in
(\frac{C}{\mu},\frac{C}{\mu(1-\kappa )})$. \par

Moreover, it can be found that when $\varpi_s \geq
\frac{C}{\mu(1-\kappa )}$, $\Gamma(1,q)$ keeps positive. Thus, the
SU's best response is to join. In this case, "join" is the unique
individual equilibrium. On the other hand, when $\varpi_s \leq
\frac{C}{\mu}$, $\Gamma(1,q)$ becomes negative. In this case,
"balk" is the individuals equilibrium strategy. Therefore,
\textbf{Theorem 1} can be proved.

\section*{Appendix D}

It can be noticed that (\ref{eq:ProfitSO}) can be reformed as
$S(q) = f(x(\lambda q))$, where $f(x) = \frac{\eta x[mu \varpi_s
(1-x)-C]}{(\xi + \eta )(1-x)}$. As we can see, $S^{'}(q) =0$ can
be deduced to $f^{'}(x(\lambda q))=0$, which means that we need to
solve $f'(x)=0$, that is

\begin{equation}
\label{eq:D1}
 \mu \varpi_s x^2 - 2 \mu \varpi_s x +\mu \varpi_s
-C=0.
\end{equation}

We can see that the discriminant of the quadratic polynomial in
(\ref{eq:D1}) is less or equal to $0$ iff $C \leq 0$. We can
deduce that $f(x)$ is increasing and so as $S(q)$. (\ref{eq:D1})
has two roots, which are
\begin{equation}
\label{eq:D2} x_1 = 1-\frac{\sqrt{\vartheta}}{\mu \varpi_s}, x_2 =
1+ \frac{\sqrt{\vartheta}}{\mu \varpi_s}.
\end{equation}

It is apparently $x_1<1<x_2$ and we have following discussions,

\begin{itemize}
    \item When $x_1\leq 0$, then we have $x_1<0<\kappa<x_2$. Then S(q)
    is decreasing in $[0,1]$ and the social optimal is $q_s=0$. We
    can also see that $x_1 \leq 0$ implies that $\varpi_s \leq \frac{C}{\mu}$.
    \item When $0<x_1<\kappa$, then we can see that the maximum
    $S(q)$ is obtained for $q$ such that $x(q) = x_1$. By using
    (\ref{eq:O1}), substituting $x_1$ for $y$ and solving for $q$,
    we obtain $q_s = \frac{\sqrt{\vartheta}(\mu \varpi_s -
\sqrt{\vartheta})(\xi \varpi_s + \sqrt{\vartheta})}{\lambda
\vartheta \varpi_s}$. It can be observed that $0<x_1<\kappa$ means
$\frac{C}{\mu}< \varpi_s < \frac{C}{\mu (1-\kappa)^2}$.

    \item Similarly, when $x_1 \geq \kappa$ we can have the social
    optimal joining probablity is $q_s = 1$ and $\varpi_s \geq \frac{C}{\mu (1-\kappa)^2}$.
\end{itemize}

\end{document}